\newcommand{\mat}[1] {\mathbf{#1}} 
\newcommand{\vect}[1] {\mathbf{#1}}
\newtheorem{thm}{Theorem}
\newtheorem{lem}{Lemma}
\begin{document}
	
%On Probability of Support Recovery For Orthogonal Matching Pursuit Using Mutual Coherence
%	A Coherence Based Support Recovery Bound \\For Orthogonal Matching Pursuit
\title{On Probability of Support Recovery for Orthogonal Matching Pursuit Using Mutual Coherence}
	%A New Performance Guarantee For Orthogonal Matching Pursuit Using Mutual Coherence}

\author{Ehsan~Miandji$\,^{\dagger}$,~\IEEEmembership{Student~Member,~IEEE,}%
	~Mohammad~Emadi$\,^{\dagger}$,~\IEEEmembership{Member,~IEEE,}%
	\\~Jonas~Unger,~\IEEEmembership{Member,~IEEE,}%
	~and~Ehsan~Afshari,~\IEEEmembership{Senior~Member,~IEEE}%
	\thanks{Copyright (c) 2017 IEEE. Personal use of this material is permitted. However, permission to use this material for any other purposes must be obtained from the IEEE by sending a request to pubs-permissions@ieee.org.}
	\thanks{E. Miandji and J. Unger are with the Department of Science and Technology, Link\"{o}ping University, Sweden (e-mail:  \{ehsan.miandji, jonas.unger\}@liu.se). M. Emadi is with Qualcomm Technologies Inc., San Jose, CA USA (e-mail: memadi@qti.qualcomm.com). And E. Afshari is with the Department of Electrical Engineering and Computer Science, University of Michigan, MI USA (e-mail: afshari@umich.edu).}%
	\thanks{\hspace{-8pt}$\dagger$ Equal contributer}}
%\thanks{Manuscript received April 19, 2005; revised December 27, 2012.}\vspace{-25pt}}

% The paper headers
\markboth{}%
{}

% make the title area
\maketitle

% As a general rule, do not put math, special symbols or citations
% in the abstract or keywords.
\begin{abstract}
	In this paper we present a new coherence-based performance guarantee for the Orthogonal Matching Pursuit (OMP) algorithm. A lower bound for the probability of correctly identifying the support of a sparse signal with additive white Gaussian noise is derived. Compared to previous work, the new bound takes into account the signal parameters such as dynamic range, noise variance, and sparsity. Numerical simulations show significant improvements over previous work and a closer match to empirically obtained results of the OMP algorithm. \vspace{-4pt}
\end{abstract}

% Note that keywords are not normally used for peerreview papers.
\begin{IEEEkeywords}
	Compressed Sensing (CS), Sparse Recovery, Orthogonal Matching Pursuit (OMP), Mutual Coherence
\end{IEEEkeywords}

% For peer review papers, you can put extra information on the cover
% page as needed:
% \ifCLASSOPTIONpeerreview
% \begin{center} \bfseries EDICS Category: 3-BBND \end{center}
% \fi
%
% For peerreview papers, this IEEEtran command inserts a page break and
% creates the second title. It will be ignored for other modes.
\IEEEpeerreviewmaketitle

%\vspace{-5pt}
\section{Introduction} \label{sec:intro}
%\vspace{-2pt}
%
\IEEEPARstart{L}et $\vect{s}\in \mathbb{R}^N$ be an unknown variable that we would like to estimate from the measurements
\vspace{-2pt}
\begin{equation} \label{eq:problem:sp}
\vect{y} = \mat{A}\vect{s}+\vect{w},
\end{equation}
where $\mat{A}\in\mathbb{R}^{M\times N}$ is a deterministic matrix and $\vect{w}\in\mathbb{R}^M$ is a noise vector, often assumed to be white Gaussian noise with mean zero and covariance $\sigma^2\mat{I}$, where $\mat{I}$ is the identity matrix. The matrix $\mat{A}$ is called a \emph{dictionary}. We consider the case when $\mat{A}$ is \emph{overcomplete}, i.e. $N>M$, hence uniqueness of the solution of \eqref{eq:problem:sp} cannot be guaranteed. However, if most elements of $\vect{s}$ are zero, we can limit the space of possible solutions, or even obtain a unique one, by solving
\vspace{-1pt}
\begin{equation} \label{eq:l0}
\hat{\vect{s}} = \underset{\vect{x}}{\mathrm{min}} \; \|\vect{x}\|_0 \;\;\; \mathrm{s.t.} \;\;\; \|\vect{y}-\mat{A}\vect{x}\|_2^2 \le \epsilon,
\end{equation}
%
%where $\epsilon$ is a constant related to $\vect{w}$. The location of nonzero entries in $\vect{s}$ is known as the \emph{support} set, which we denote by $\Lambda$. In some applications, e.g. estimating the direction of arrival in antenna arrays~\cite{ref3}, correctly identifying the support is more important than accuracy of values in $\hat{\vect{s}}$. When the correct support is known, the solution of the least squares problem $\|\vect{y}-\mat{A}_{\Lambda}\vect{x}_{\Lambda}\|_2^2$ gives $\hat{\vect{s}}$, where $\mat{A}_{\Lambda}$ is formed using the columns of $\mat{A}$ indexed by $\Lambda$. An upper bound for the Mean Square Error (MSE) of the estimated signal, $\|\vect{s}-\hat{\vect{s}}\|_2^2$, has been formulated in previous work on this topic, see e.g.~\cite{ref1,ref30}.
where $\epsilon$ is a constant related to $\vect{w}$. The location of nonzero entries in $\vect{s}$ is known as the \emph{support} set, which we denote by $\Lambda$. In some applications, e.g. estimating the direction of arrival in antenna arrays~\cite{ref3}, correctly identifying the support is more important than accuracy of values in $\hat{\vect{s}}$. When the correct support is known, the solution of the least squares problem $\|\vect{y}-\mat{A}_{\Lambda}\vect{x}_{\Lambda}\|_2^2$ gives $\hat{\vect{s}}$, where $\mat{A}_{\Lambda}$ is formed using the columns of $\mat{A}$ indexed by $\Lambda$, see~\cite{ref1,ref30}.
% An upper bound for the Mean Square Error (MSE), $\|\vect{s}-\hat{\vect{s}}\|_2^2$, is established in~\cite{ref1,ref30}.  

Solving \eqref{eq:l0} is an NP-hard problem and several greedy algorithms have been proposed to compute an approximate solution of \eqref{eq:l0}; a few examples include Matching Pursuit (MP)~\cite{MP}, Orthogonal Matching Pursuit (OMP)~\cite{ref6}, Regularized-OMP (ROMP)~\cite{romp}, and Compressive Sampling Matching Pursuit (CoSaMP)~\cite{cosamp}. In contrast to greedy methods, convex relaxation algorithms~\cite{sparsa,lars,gpsr,spgl1} replace the $\ell_0$ pseudo-norm in \eqref{eq:l0} with an $\ell_1$ norm, leading to a convex optimization problem known as the Basis Pursuit (BP) problem~\cite{BP}. While convex relaxation methods require weaker conditions for exact recovery~\cite{ref25,ref1}, they are computationally more expensive than greedy methods, specially when $N\gg M$~\cite{fastOMP,ref4,cosamp}. 

%The most important aspect of a sparse recovery algorithm is the uniqueness of the obtained solution, which according to \eqref{eq:l0} is the most sparse solution. Mutual Coherence (MC) \cite{ref24}, cumulative coherence \cite{ref25}, the spark~\cite{DonohoSpark}, Exact Recovery Coefficient (ERC) \cite{ref23}, Null Space Property (NSP)~\cite{NSP}, and Restricted Isometry Constant (RIC) \cite{ref26} are metrics proposed to evaluate the suitability of a dictionary for exact recovery. Among these metrics, RIC, spark, and ERC achieve better performance guarantees; however, computing RIC and the spark is in general NP-hard and calculating ERC and NSP is a combinatorial problem. In contrast, MC can be efficiently computed and has shown to provide acceptable performance guarantees, see e.g.~\cite{mc-perf,mc-perf2,ref29,ref1,ref30}. 

The most important aspect of a sparse recovery algorithm is the uniqueness of the obtained solution. Mutual Coherence (MC) \cite{ref24}, cumulative coherence \cite{ref25}, the spark~\cite{DonohoSpark}, Exact Recovery Coefficient (ERC) \cite{ref23}, and Restricted Isometry Constant (RIC) \cite{ref26} are metrics proposed to evaluate the suitability of a dictionary for exact recovery. Among these metrics, RIC, spark, and ERC achieve better performance guarantees; however, computing RIC and the spark is in general NP-hard and calculating ERC is a combinatorial problem. In contrast, MC can be efficiently computed and has shown to provide acceptable performance guarantees~\cite{mc-perf,mc-perf2,ref29,ref1,ref30}. 

In this paper, we derive a new lower bound for the probability of correctly identifying the support of a sparse signal using the OMP algorithm. Our main motivation is that previous methods do not directly take into account signal parameters such as dynamic range, sparsity, and the noise characteristics in the computed probability. We will elaborate on this in section \ref{sec:motivation}, where we discuss the most recent theoretical analysis for OMP based on MC. The main result of the paper will be presented in section \ref{sec:analysis}, followed by numerical evaluation of the new performance guarantee in section \ref{sec:results}. 

\section{Motivation}	\label{sec:motivation}
The mutual coherence of a dictionary $\mat{A}$, denoted $\mu_\mathrm{max}(\mat{A})$, is the maximum absolute cross correlation of its columns~\cite{ref24}:
\begin{align}
\mu_{i,j}(\mat{A}) &= \langle \mat{A}_{i},\mat{A}_{j}\rangle, \label{eq:mc1}\\
\mu_\textrm{max}(\mat{A}) &= \underset{1\le i\neq j \le N}{\mathrm{max}}|\mu_{i,j}(\mat{A})|, \label{eq:mc2}
\end{align}
where we have assumed, as with the rest of the paper, that $\|\mat{A}_i\|_2=1$, $i\in\{1,\dots,N\}$. Apart from MC and sparsity, 
\begin{equation}
s_{\mathrm{min}} = \mathrm{min}(|\vect{s}_i|), \; \mathrm{and} \; s_{\mathrm{max}} = \mathrm{max}(|\vect{s}_i|), \;\; \forall i\in\Lambda, \label{eq:s_min_max}% \\
%s_{\mathrm{max}} &= \mathrm{max}(|\vect{s}_i|), i \in \mathrm{supp}(\vect{s}). 
\end{equation}
which define the dynamic range of the signal, also affect the performance of OMP. %These parameters define the dynamic range of $\vect{s}$. 
The following theorem establishes an important coherence-based performance guarantee for OMP. 
\begin{thm}[Ben-Haim et al.~\cite{ref30}] \label{thm:elad} 
	Let $\vect{y} = \mat{A}\vect{s}+\vect{w}$, where $\mat{A}\in\mathbb{R}^{M\times N}$, $\|\vect{s}\|_0=\tau$ and $\vect{w}\sim \mathcal{N}(0,\sigma^2\mat{I})$. If 
	\begin{equation} \label{eq:elad_cond}
	s_{\mathrm{min}}-(2\tau-1)\mu_{\mathrm{max}} s_{\mathrm{min}} \ge 2\beta, 
	\end{equation}
	where $\beta \triangleq \sigma \sqrt{2(1+\alpha)\log N}$ is defined for some constant $\alpha>0$, then with probability at least 
	\begin{equation} \label{eq:elad_prob}
	1 - \frac{1}{N^\alpha \sqrt{\pi (1+\alpha)\log N}},
	\end{equation}
	OMP identifies the true support, denoted $\Lambda$.
\end{thm}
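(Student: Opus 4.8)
The plan is to separate the claim into a probabilistic estimate that produces the bound in \eqref{eq:elad_prob} and a deterministic recovery argument that takes over once the noise has been controlled. The natural ``good noise'' event to condition on is
\begin{equation*}
\mathcal{E} = \Big\{\, \max_{1\le i\le N} |\langle \mat{A}_i,\vect{w}\rangle| \le \beta \,\Big\},
\end{equation*}
and I would show two things: that $P(\mathcal{E})$ is at least the quantity in \eqref{eq:elad_prob}, and that on $\mathcal{E}$ the condition \eqref{eq:elad_cond} forces OMP to select only indices from $\Lambda$.

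For the probabilistic half, note that $\|\mat{A}_i\|_2=1$ and $\vect{w}\sim\mathcal{N}(0,\sigma^2\mat{I})$ make each $\langle \mat{A}_i,\vect{w}\rangle$ a zero-mean Gaussian of variance $\sigma^2$. Using the tail bound $P(|Z|>t)\le \sqrt{2/\pi}\,e^{-t^2/2}/t$ for $Z\sim\mathcal{N}(0,1)$ with $t=\beta/\sigma=\sqrt{2(1+\alpha)\log N}$, the per-atom failure probability is $N^{-(1+\alpha)}/\sqrt{\pi(1+\alpha)\log N}$, and a union bound over the $N$ atoms gives $P(\mathcal{E}^{c})\le N^{-\alpha}/\sqrt{\pi(1+\alpha)\log N}$. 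This is exactly the complement of \eqref{eq:elad_prob}, and it is this step that pins down the specific definition of $\beta$; I expect it to be the routine part.

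The deterministic half proceeds by induction on the iteration index, with the hypothesis that after $k$ steps the selected set $S_k$ satisfies $S_k\subseteq\Lambda$ and $|S_k|=k$. OMP chooses a correct atom at step $k+1$ as soon as $\max_{i\in\Lambda\setminus S_k}|\langle\mat{A}_i,\vect{r}_k\rangle|>\max_{j\notin\Lambda}|\langle\mat{A}_j,\vect{r}_k\rangle|$, where $\vect{r}_k=(\mat{I}-\mat{P}_{S_k})\vect{y}$ and $\mat{P}_{S_k}$ is the orthogonal projector onto the span of $\mat{A}_{S_k}$. Splitting $\vect{r}_k$ into a signal part $(\mat{I}-\mat{P}_{S_k})\mat{A}_{\Lambda\setminus S_k}\vect{s}_{\Lambda\setminus S_k}$ and a noise part $(\mat{I}-\mat{P}_{S_k})\vect{w}$, I would bound the bad-atom correlation of the signal part by a coefficient-free greedy-selection ratio of the form $\tau\mu_{\max}/(1-(\tau-1)\mu_{\max})$ times the best good-atom correlation (this ratio is below $1$ precisely when $(2\tau-1)\mu_{\max}<1$, which \eqref{eq:elad_cond} implies), then lower bound the best good-atom correlation by a multiple of the smallest surviving coefficient magnitude, which is at least $s_{\min}$, and finally absorb the noise on both sides using $\mathcal{E}$. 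Collecting these estimates collapses to $s_{\min}(1-(2\tau-1)\mu_{\max})\ge 2\beta$, i.e.\ \eqref{eq:elad_cond}; iterating for $\tau$ steps exhausts $\Lambda$, and $|S_\tau|=\tau=\|\vect{s}\|_0$ gives exact support identification.

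The hard part will be the bookkeeping hidden inside the projector $\mat{I}-\mat{P}_{S_k}$ for $k\ge1$. Two coupled issues arise: the correlations are taken against the projected atoms $(\mat{I}-\mat{P}_{S_k})\mat{A}_i$ rather than $\mat{A}_i$, so neither the cross terms are literally controlled by $\mu_{\max}$ nor the noise literally by $\beta$; and the good-atom lower bound must hold uniformly across all iterations, being tightest at the final steps where only the weakest coefficients remain, which is the true reason $s_{\min}$ rather than the dynamic-range maximum governs the condition. Both are handled by a Neumann-series estimate showing the Gram matrix $\mat{A}_{S_k}^{T}\mat{A}_{S_k}$ is well conditioned when $(\tau-1)\mu_{\max}<1$, which bounds the perturbation the projection introduces into the cross-correlations and reduces the projected noise back to a small multiple of $\beta$. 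Making this perturbation accounting tight enough to recover the clean constant $2\tau-1$ in \eqref{eq:elad_cond}, rather than the Gaussian tail estimate, is where I expect the real work to lie.
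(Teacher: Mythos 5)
Your probabilistic half is exactly the argument the paper attributes to Ben-Haim et al.: the Gaussian tail bound $\mathrm{Pr}\{|Z|>t\}\le\sqrt{2/\pi}\,e^{-t^2/2}/t$ at $t=\beta/\sigma=\sqrt{2(1+\alpha)\log N}$ plus a union bound over the $N$ atoms gives precisely the complement of \eqref{eq:elad_prob}, and this is also how the paper itself obtains \eqref{eq:thm1_proof_12}--\eqref{eq:thm1_proof_13}. The gap is in your deterministic half. You set up the iteration in terms of the projected residual $(\mat{I}-\mat{P}_{S_k})\vect{y}$ and then must control correlations against \emph{projected} atoms, for which neither $\mu_\mathrm{max}$ nor the event $\{|\langle\mat{A}_j,\vect{w}\rangle|\le\beta\ \forall j\}$ applies; your proposed fix via a Neumann-series conditioning of $\mat{A}_{S_k}^{T}\mat{A}_{S_k}$ would at best yield \eqref{eq:elad_cond} with degraded constants, and you explicitly concede that recovering the clean factors $2$ and $(2\tau-1)$ is unresolved. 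As written, the proof does not close.

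The missing idea is a reparametrization of the residual that makes the projector disappear. Since the least-squares coefficient vector $\vect{x}_k$ after $k$ correct steps is supported on $S_k\subseteq\Lambda$, the residual is $\vect{r}_k=\vect{y}-\mat{A}_{S_k}\vect{x}_k=\mat{A}_{\Lambda}\vect{c}+\vect{w}$ with $\vect{c}=\vect{s}_{\Lambda}-\vect{x}_k$ supported on $\Lambda$ and satisfying $\vect{c}_j=\vect{s}_j$ for all $j\in\Lambda\setminus S_k$. The noise enters \emph{unprojected}, so $|\langle\mat{A}_j,\vect{w}\rangle|\le\beta$ applies verbatim at every iteration; the signal part is a combination of at most $\tau$ \emph{unprojected} atoms, so $\max_{k\notin\Lambda}|\langle\mat{A}_k,\mat{A}_{\Lambda}\vect{c}\rangle|\le\tau\mu_\mathrm{max}\|\vect{c}\|_\infty$ while $\max_{j\in\Lambda}|\langle\mat{A}_j,\mat{A}_{\Lambda}\vect{c}\rangle|\ge(1-(\tau-1)\mu_\mathrm{max})\|\vect{c}\|_\infty$; and $\|\vect{c}\|_\infty\ge\max_{j\in\Lambda\setminus S_k}|\vect{s}_j|\ge s_\mathrm{min}$ because $\vect{c}$ agrees with $\vect{s}$ off the selected set (this is also why $s_\mathrm{min}$, not a per-iteration surviving coefficient, governs the condition uniformly). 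Comparing the two sides and absorbing $\pm\beta$ collapses directly to $s_\mathrm{min}(1-(2\tau-1)\mu_\mathrm{max})\ge 2\beta$, i.e.\ \eqref{eq:elad_cond}, with no Gram-matrix conditioning needed; this is the same one-step comparison the paper reuses as \eqref{eq:thm1_proof_1} in its proof of Theorem~\ref{thm:ours}.
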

The proof involves analyzing the probability event $\mathrm{Pr}\{|\langle\mat{A}_j,\vect{w}\rangle|\le\beta\}$, for some constant $\beta > 0$ and for all $j=1,\dots,N$ (see~\cite{ref30} for details). They show that with the lower bound probability of \eqref{eq:elad_prob}, the inequality $|\langle\mat{A}_j,\vect{w}\rangle|\le\beta$ holds. It is then shown that if $|\langle\mat{A}_j,\vect{w}\rangle|\le\beta$ and \eqref{eq:elad_cond} hold, then OMP identifies the correct support in each iteration. Moreover, it is assumed that the elements of the sparse vector $\vect{s}$ are deterministic variables. Hence a strong condition such as \eqref{eq:elad_cond} is required to determine if the support of $\vect{s}$ can be recovered.

%The proof involves analyzing $\mathrm{Pr}\{|\langle\mat{A}_j,\vect{w}\rangle|\le\beta\}$, for some constant $\beta > 0$ and $\forall j \in \{1,\dots,N\}$ (see~\cite{ref30} for details). They show that with the lower bound probability of \eqref{eq:elad_prob}, the inequality $|\langle\mat{A}_j,\vect{w}\rangle|\le\beta$ holds. It is then shown that if $|\langle\mat{A}_j,\vect{w}\rangle|\le\beta$ and \eqref{eq:elad_cond} hold, OMP identifies the correct support in each iteration. 

%In this paper we improve Theorem \ref{thm:elad}. 
Our analysis removes the condition stated in \eqref{eq:elad_cond} and introduces a probabilistic bound that depends on $N$, $\tau$, $\mu_\mathrm{max}$, $s_{\mathrm{max}}$, $s_{\mathrm{min}}$, and the signal noise. Hence we derive a probability bound that directly takes into account signal parameters and MC. Moreover, unlike \cite{ref30}, we assume that the nonzero elements of $\vect{s}$ are centered independent random variables with arbitrary distributions. This enables the derivation of a more accurate bound for the probability of exact support recovery. 

%The paper is organized as follows. In Section \ref{sec:problemDef}, we formulate the sparse DOA estimation problem. In Section \ref{sec:background}, we will discuss the OMP algorithm and the shortcomings of previous theoretical bounds for this method in the context of DOA estimation. Section \ref{sec:analysis} presents the main result of this paper, where we theoretically analyze the convergence of OMP, considering signal parameters and the array configuration. The theoretical results are followed by our numerical evaluations in Section \ref{sec:results}. Finally, we conclude the paper in Section \ref{sec:conclusion}. 

\vspace{-2pt}
\section{OMP CONVERGENCE ANALYSIS} \label{sec:analysis}
\vspace{-2pt}
In this section we present and prove the main result of the paper. Numerical results will be presented in section \ref{sec:results}. 
\begin{thm} \label{thm:ours}
Let $\vect{y} = \mat{A}\vect{s}+\vect{w}$, where $\mat{A}\in\mathbb{R}^{M\times N}$, $\tau=\|\vect{s}\|_0$ and $\vect{w}\sim \mathcal{N}(0,\sigma^2\mat{I})$. Moreover, assume that the nonzero elements of $\vect{s}$ are independent centered random variables with arbitrary distributions. Let $\lambda=\mathrm{Pr}\{|\langle\mat{A}_j,\vect{w}\rangle|\le\beta\}$, for some constant $\beta\ge 0$ and $\forall j\in\{1,\dots,N\}$. If $s_\mathrm{min}/2\ge\beta$, then OMP identifies the true support with lower bound probability 
\begin{equation} \label{eq:ours}
%\mathrm{Pr}\{\mathrm{success}\} \ge
\lambda\left(1-2N\;\mathrm{exp}\left(\frac{-N(s_\mathrm{min}/2-\beta)^2}{2\tau^2\gamma^2+2N \gamma(s_\mathrm{min}/2-\beta)/3}\right)\right),
\end{equation}
where $\gamma=\mu_\mathrm{max}s_\mathrm{max}$. Moreover, $\lambda$ is lower bounded by
\begin{equation}	\label{eq:lambda}
1-N\sqrt{\frac{2}{\pi}}\frac{\sigma}{\beta}e^{-\beta^2/2\sigma^2}.
\end{equation}
\end{thm}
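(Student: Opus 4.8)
The plan is to decompose the success event into an independent noise part and a signal part, and to exploit that the nonzero entries of $\vect{s}$ are centered so that the ``interference'' between atoms concentrates sharply around zero rather than attaining its worst case. First I would condition on the noise event $E_{\vect{w}}=\{|\langle\mat{A}_j,\vect{w}\rangle|\le\beta \ \forall j\}$, whose probability is exactly $\lambda$. At the first OMP iteration the residual is $\vect{y}=\mat{A}\vect{s}+\vect{w}$, and for an in-support atom $j\in\Lambda$ one has $\langle\mat{A}_j,\vect{y}\rangle=s_j+\xi_j+\langle\mat{A}_j,\vect{w}\rangle$ with interference $\xi_j=\sum_{i\in\Lambda\setminus\{j\}}\mu_{j,i}s_i$, while for an off-support atom $k\notin\Lambda$ one has $\langle\mat{A}_k,\vect{y}\rangle=\zeta_k+\langle\mat{A}_k,\vect{w}\rangle$ with $\zeta_k=\sum_{i\in\Lambda}\mu_{k,i}s_i$. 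On $E_{\vect{w}}$ the triangle inequality gives $\min_{j\in\Lambda}|\langle\mat{A}_j,\vect{y}\rangle|\ge s_\mathrm{min}-\max_j|\xi_j|-\beta$ and $\max_{k\notin\Lambda}|\langle\mat{A}_k,\vect{y}\rangle|\le\max_k|\zeta_k|+\beta$. Hence, if every interference term satisfies $|\xi_j|,|\zeta_k|<t:=s_\mathrm{min}/2-\beta$ (nonnegative precisely because $s_\mathrm{min}/2\ge\beta$), then $\min_{j\in\Lambda}|\langle\mat{A}_j,\vect{y}\rangle|>s_\mathrm{min}/2>\max_{k\notin\Lambda}|\langle\mat{A}_k,\vect{y}\rangle|$, so OMP selects a correct atom. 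I would then argue that the same inequality propagates through all $\tau$ iterations by applying it to the projected residual within the deterministic OMP framework of Theorem~\ref{thm:elad}; the point is that our argument replaces the worst-case interference $(2\tau-1)\gamma$ that forces \eqref{eq:elad_cond} by the much smaller \emph{typical} interference, which is the source of the improvement.

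The core probabilistic step is to control the interference terms. Each $\xi_j$ and $\zeta_k$ is a sum of at most $\tau$ independent centered random variables $\mu_{\cdot,i}s_i$, each bounded in magnitude by $\gamma=\mu_\mathrm{max}s_\mathrm{max}$ and with variance at most $\gamma^2$ (since $\mathrm{Var}(s_i)=\mathbb{E}[s_i^2]\le s_\mathrm{max}^2$). Applying Bernstein's inequality to each such sum yields a two-sided tail bound of the form $\mathrm{Pr}(|\xi_j|\ge t)\le 2\exp(-t^2/(2v+\tfrac{2}{3}\gamma t))$ for a suitable variance proxy $v$, and a union bound over the $N$ atoms (one interference term per atom) converts this into a failure probability of at most $2N\exp(\cdot)$, i.e.\ a success probability of at least $1-2N\exp(\cdot)$ for the signal event. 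Since this event depends only on $\vect{s}$, which is independent of $\vect{w}$, the two events factor, and multiplying by $\mathrm{Pr}(E_{\vect{w}})=\lambda$ gives the overall bound \eqref{eq:ours}.

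The lower bound \eqref{eq:lambda} on $\lambda$ then follows from a standard Gaussian tail estimate: by the unit-norm normalization of the columns, $\langle\mat{A}_j,\vect{w}\rangle\sim\mathcal{N}(0,\sigma^2)$, so with $Q$ the Gaussian tail function $\mathrm{Pr}(|\langle\mat{A}_j,\vect{w}\rangle|>\beta)=2Q(\beta/\sigma)\le\sqrt{2/\pi}\,(\sigma/\beta)e^{-\beta^2/2\sigma^2}$, and a union bound over the $N$ columns gives $\lambda\ge 1-N\sqrt{2/\pi}\,(\sigma/\beta)e^{-\beta^2/2\sigma^2}$.

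The step I expect to be most delicate is the concentration bound, specifically pinning down the variance proxy so that the exponent matches \eqref{eq:ours}. The natural proxy obtained by summing $\tau$ variances is $v=\tau\gamma^2$, whereas the stated denominator carries $2\tau^2\gamma^2$ together with an overall factor $N$ on the numerator and on the linear term; reconciling these requires the particular (looser) variance bound and normalized form of Bernstein's inequality the authors adopt, and verifying that this choice is legitimate while still giving a useful bound is where I would focus my checking. A secondary technical point is the rigorous extension of the first-iteration selection inequality to all $\tau$ iterations via the projected residuals, which I would handle by invoking the deterministic OMP machinery rather than re-deriving it.
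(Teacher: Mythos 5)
Your proposal follows essentially the same route as the paper's own proof: the same reduction to the deterministic selection criterion of Ben-Haim et al.\ (so that all $\tau$ iterations are covered at once), the same decomposition of each correlation into a signal term, an interference term, and a noise term, the same sufficient condition that every interference term fall below $t=s_\mathrm{min}/2-\beta$, the same Bernstein-plus-union-bound control of the interference, the same factorization over the independent $\vect{s}$ and $\vect{w}$, and the same Gaussian tail and union bound for $\lambda$. On all of these points your outline is sound and matches the paper.

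The one step you flag as delicate is exactly where your argument and the paper's diverge, and you should not expect to reconcile them: your variance proxy is the correct one, and the paper's is not justified. An honest application of Bernstein's inequality to $\zeta_k=\sum_{i\in\Lambda}\mu_{k,i}\vect{s}_i$ uses $\sum_{i\in\Lambda}\mathrm{E}\{\mu_{k,i}^2\vect{s}_i^2\}\le\tau\gamma^2$, giving a tail of the form $2\exp\bigl(-t^2/(2\tau\gamma^2+2\gamma t/3)\bigr)$. The paper instead bounds the \emph{individual} second moment $\mathrm{E}\{\mu_{j,n}^2\vect{s}_n^2\}$ in \eqref{eq:lemma1_c_nu} by the \emph{average} over all $N$ indices, $\frac{1}{N}\sum_{n}\mu_{\mathrm{max}}^2\mathrm{E}\{\vect{s}_n^2\}\le\frac{\tau}{N}\gamma^2$ (this is \eqref{eq:thm1_proof_10}), and then multiplies by the $\tau$ support indices to obtain $\tau\nu\le\tau^2\gamma^2/N$; clearing the factor of $N$ is what produces the $2\tau^2\gamma^2+2N\gamma t/3$ denominator in \eqref{eq:ours}. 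Bounding a single term by the mean of all $N$ terms is not valid pointwise: a support index can have $\mathrm{E}\{\mu_{j,n}^2\vect{s}_n^2\}$ as large as $\gamma^2$, not $\frac{\tau}{N}\gamma^2$, and even summing the average over the whole index set only returns $N\cdot\frac{\tau}{N}\gamma^2=\tau\gamma^2$. Consequently the exponent your argument legitimately delivers is $-t^2/(2\tau\gamma^2+2\gamma t/3)$, which is strictly weaker than the stated $-Nt^2/(2\tau^2\gamma^2+2N\gamma t/3)$ whenever $\tau<N$. In short: your proof strategy is the paper's strategy, but carried out rigorously it establishes \eqref{eq:ours} only with $\tau\gamma^2$ in place of $\tau^2\gamma^2/N$; the stronger constant in the published bound rests on the unjustified averaging step.
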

Before presenting the proof, let us compare Theorems \ref{thm:elad} and \ref{thm:ours} analytically. It is important to note that \eqref{eq:lambda} is indeed equivalent to \eqref{eq:elad_prob}. The apparent difference is only attributed to the use of $\alpha$ or $\beta$ from the definition $\beta \triangleq \sigma \sqrt{2(1+\alpha)\log N}$. For instance, using the aforementioned definition of $\beta$ on \eqref{eq:lambda} leads to \eqref{eq:elad_prob}. As a result, the second term of \eqref{eq:ours} can be interpreted as a probabilistic representation of the condition imposed by \eqref{eq:elad_cond} in Theorem \ref{thm:elad}. Moreover, because \eqref{eq:lambda} is equal to \eqref{eq:elad_prob} and the second term of \eqref{eq:ours} is in the range $[0,1]$, therefore \eqref{eq:ours} is always smaller or equal to \eqref{eq:elad_prob}. However, as it will be seen in section \ref{sec:results}, since the condition of Theorem \ref{thm:elad} in \eqref{eq:elad_cond} is not satisfied in many scenarios, our results match the empirical results more closely. Evidently, the condition $s_\mathrm{min}/2\ge\beta$ in Theorem \ref{thm:ours} is more relaxed compared to \eqref{eq:elad_cond}. Our numerical results in Section \ref{sec:results} also verify this fact.

The following lemma will provide us with the necessary tool for the proof of Theorem \ref{thm:ours}. The proof of the lemma is postponed to the Appendix. 
\begin{lem} \label{lem:1}
Define $\Gamma_j = |\langle\mat{A}_j,\mat{A}\vect{s}+\vect{w}\rangle|$, for any $j\in\{1,\dots,N\}$, where $\vect{w}\sim \mathcal{N}(0,\sigma^2\mat{I})$ and $|\langle\mat{A}_j,\vect{w}\rangle|\le\beta$. Then for some constant $\xi \ge 0$, and assuming $\xi\ge \beta$, we have
\begin{equation} \label{eq:lemma1_main}
\mathrm{Pr} \left\{ \Gamma_j \ge \xi \right\} \le 2 \; \mathrm{exp} \left(\frac{-(\xi-\beta)^2}{2(N\nu+c(\xi-\beta)/3)}\right),
\end{equation}
where 
\begin{equation}  \label{eq:lemma1_c_nu}
|\mu_{j,n}\vect{s}_n| \le c, \;\; E\left\{\mu_{j,n}^2\vect{s}_n^2\right\} \le \nu, \;\; \forall n\in \{1,\dots, N\} 
\end{equation}
\end{lem}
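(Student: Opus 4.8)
The plan is to isolate the deterministic noise contribution from the random signal contribution, reducing the claim to a concentration statement for a single sum of independent random variables, and then to invoke a Bernstein-type inequality. First I would expand the inner product using $\mu_{j,n}=\langle\mat{A}_j,\mat{A}_n\rangle$,
\[
\langle\mat{A}_j,\mat{A}\vect{s}+\vect{w}\rangle=\sum_{n=1}^{N}\mu_{j,n}\vect{s}_n+\langle\mat{A}_j,\vect{w}\rangle .
\]
Writing $S=\sum_{n=1}^{N}\mu_{j,n}\vect{s}_n$ and combining the triangle inequality with the hypothesis $|\langle\mat{A}_j,\vect{w}\rangle|\le\beta$ gives $\Gamma_j\le|S|+\beta$. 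Since $\xi\ge\beta$, the event $\{\Gamma_j\ge\xi\}$ is contained in $\{|S|\ge\xi-\beta\}$, so that $\mathrm{Pr}\{\Gamma_j\ge\xi\}\le\mathrm{Pr}\{|S|\ge\xi-\beta\}$. This turns the lemma into a tail bound for $S$ alone.

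The second step is to recognize $S$ as a sum of the independent random variables $X_n=\mu_{j,n}\vect{s}_n$ and to apply Bernstein's inequality. By the standing model assumptions the nonzero entries of $\vect{s}$ are independent and centered, so each $X_n$ has zero mean and the $X_n$ are mutually independent, with the zero entries contributing identically null terms. The two quantities supplied by \eqref{eq:lemma1_c_nu} are exactly the ingredients the inequality requires: the uniform bound $|X_n|=|\mu_{j,n}\vect{s}_n|\le c$ and the second-moment bound $E\{X_n^2\}=E\{\mu_{j,n}^2\vect{s}_n^2\}\le\nu$, whence $\sum_{n=1}^{N}E\{X_n^2\}\le N\nu$. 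The two-sided Bernstein inequality with $t=\xi-\beta\ge 0$ then yields
\[
\mathrm{Pr}\{|S|\ge\xi-\beta\}\le 2\,\mathrm{exp}\!\left(\frac{-(\xi-\beta)^2}{2\left(N\nu+c(\xi-\beta)/3\right)}\right),
\]
which is precisely \eqref{eq:lemma1_main}.

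The argument is therefore a reduction followed by a direct appeal to Bernstein's inequality, and the only real work lies in checking its hypotheses. The step I expect to need the most care is confirming that the conditioning implicit in the lemma is handled consistently: the noise term is treated as a deterministic quantity bounded by $\beta$, so that all remaining randomness resides in $\vect{s}$ and the probability in \eqref{eq:lemma1_main} is taken solely over the signal distribution. Once this separation is fixed, independence and centering of the $X_n$ follow immediately, and the bounds $c$ and $\nu$ substitute directly into the standard form of the inequality. A minor point worth flagging is that summing the second-moment bound over all $N$ indices, rather than only over the $\tau$ indices of the support, is what produces the $N\nu$ term in the statement; this is a deliberate, if slightly loose, choice that matches the form needed downstream in Theorem \ref{thm:ours}.
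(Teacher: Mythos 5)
Your proposal is correct and follows essentially the same route as the paper's own proof: expand $\langle\mat{A}_j,\mat{A}\vect{s}+\vect{w}\rangle$ into the signal sum plus the noise term, use the triangle inequality with $|\langle\mat{A}_j,\vect{w}\rangle|\le\beta$ to reduce to a tail bound on $\sum_n\mu_{j,n}\vect{s}_n$, and apply Bernstein's inequality with the bounds $c$ and $\nu$ at $\delta=\xi-\beta$. Your version is in fact slightly cleaner, since it avoids the paper's artificial device of writing the noise term as $\frac{1}{N}\langle\mat{A}_j,\vect{w}\rangle$ inside the sum over $n$.
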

\noindent We can now state the proof of Theorem \ref{thm:ours}.
\begin{proof} [Proof of Theorem \ref{thm:ours}]
It was shown in \cite{ref30} that OMP identifies the true support $\Lambda$ if
\begin{equation} \label{eq:thm1_proof_1}
\underset{j\in \Lambda}{\mathrm{min}} |\langle\mat{A}_j,\mat{A}_{\Lambda}\vect{s}_{\Lambda}+\vect{w}\rangle| \ge \underset{k\notin \Lambda}{\mathrm{max}} |\langle\mat{A}_k,\mat{A}_{\Lambda}\vect{s}_{\Lambda}+\vect{w}\rangle|.
\end{equation}
\noindent The term on the left-hand side of \eqref{eq:thm1_proof_1} can be rewritten as 
\begin{align}
\underset{j\in \Lambda}{\mathrm{min}} &|\langle\mat{A}_j,\mat{A}_{\Lambda}\vect{s}_{\Lambda}+\vect{w}\rangle| \nonumber \\
&= \underset{j\in \Lambda}{\mathrm{min}}\left|\vect{s}_j + \langle\mat{A}_j,\mat{A}_{\Lambda\setminus\{j\}}\vect{s}_{\Lambda\setminus\{j\}}+\vect{w}\rangle\right|  \\
&\ge \underset{j\in \Lambda}{\mathrm{min}} \left|\vect{s}_j\right| - \underset{j\in\Lambda}{\mathrm{max}} \left|\langle\mat{A}_j,\mat{A}_{\Lambda\setminus\{j\}}\vect{s}_{\Lambda\setminus\{j\}}+\vect{w}\rangle\right|. \label{eq:thm1_proof_2}
\end{align}
From \eqref{eq:thm1_proof_1} and \eqref{eq:thm1_proof_2}, we can see that the OMP algorithm identifies the true support if
\begin{equation} \label{eq:thm1_proof_3}
\begin{cases}
\begin{aligned}
&\underset{k\notin \Lambda}{\mathrm{max}} \left\{\Gamma_k\right\} < \underset{j\in \Lambda}{\mathrm{min}} \frac{|\vect{s}_j|}{2},\\
&\underset{j\in\Lambda}{\mathrm{max}} \left|\langle\mat{A}_j,\mat{A}_{\Lambda\setminus\{j\}}\vect{s}_{\Lambda\setminus\{j\}}+\vect{w}\rangle\right| < \underset{j\in \Lambda}{\mathrm{min}} \frac{|\vect{s}_j|}{2}.
\end{aligned}
\end{cases}
\end{equation}
Using \eqref{eq:thm1_proof_3}, we can define the probability of error as
\begin{align}
\mathrm{Pr}\{\mathrm{error}\} &\le \mathrm{Pr}\left\{ \underset{j\in \Lambda}{\mathrm{max}} \left|\langle\mat{A}_j,\mat{A}_{\Lambda\setminus\{j\}}\vect{s}_{\Lambda\setminus\{j\}}+\vect{w}\rangle\right| \ge \frac{s_{\mathrm{min}}}{2} \right\} \nonumber \\
&\qquad\qquad\quad+\mathrm{Pr}\left\{\underset{k\notin \Lambda}{\mathrm{max}} \left\{\Gamma_k\right\} \ge \frac{s_{\mathrm{min}}}{2} \right\} \label{eq:thm1_proof_4} \\
&\le\sum\limits_{j\in\Lambda}^{}\mathrm{Pr}\left\{\left|\langle\mat{A}_j,\mat{A}_{\Lambda\setminus\{j\}}\vect{s}_{\Lambda\setminus\{j\}}+\vect{w}\rangle\right| \ge \frac{s_{\mathrm{min}}}{2} \right\} \nonumber \\ &\qquad\qquad\quad+\sum\limits_{k\notin\Lambda}^{}\mathrm{Pr}\left\{\Gamma_k \ge \frac{s_{\mathrm{min}}}{2}\right\}. \label{eq:thm1_proof_5}	
\end{align}
For the first term on the right-hand side of \eqref{eq:thm1_proof_5}, excluding the summation over the indices in $\Lambda$, from Lemma \ref{lem:1} we have
\begin{multline}	\label{eq:thm1_proof_6}
\underset{j\in\Lambda}{\mathrm{Pr}} \left\{\left|\langle\mat{A}_j,  \mat{A}_{\Lambda\setminus\{j\}}\vect{s}_{\Lambda\setminus\{j\}} + \vect{w}\rangle\right| \ge \frac{s_{\mathrm{min}}}{2} \right\} \\
\le \underbrace{2 \; \mathrm{exp} \left(\frac{-\rho^2}{2((\tau-1)\nu+ c\rho/3)}\right)}_{\mathrm{P}_1}, 
\end{multline}
where $\rho=s_\mathrm{min}/2-\beta$ is defined for notational brevity. Note that the dictionary $\mat{A}$ in \eqref{eq:thm1_proof_6} is supported on $\Lambda\setminus\{j\}$, i.e. all the indices in the true support excluding $j$. Therefore the term $(\tau-1)$, instead of $N$, appears in the denominator of \eqref{eq:thm1_proof_6}. Similarly, for the second term of \eqref{eq:thm1_proof_5} we have
\begin{align} 	\label{eq:thm1_proof_7}
\underset{k\notin\Lambda}{\mathrm{Pr}} \left\{\Gamma_k \ge \frac{s_{\mathrm{min}}}{2}\right\} \le \underbrace{2 \; \mathrm{exp} \left(\frac{-\rho^2}{2(\tau\nu+c\rho/3)}\right)}_{\mathrm{P}_2}.
\end{align}
\noindent Substituting \eqref{eq:thm1_proof_6} and \eqref{eq:thm1_proof_7} into \eqref{eq:thm1_proof_5} yields

\begin{equation} \label{eq:thm1_proof_8}
\mathrm{Pr}\{\mathrm{error}\} \le \tau\mathrm{P}_1 + (N-\tau)\mathrm{P}_2 \le N\mathrm{P}_2, 
\end{equation}

\noindent where the last inequality follows since $\mathrm{P}_2>\mathrm{P}_1$. 

Moreover, for the upper bounds $c$ and $\nu$ in \eqref{eq:lemma1_c_nu} we have
\begin{align}
|\mu_{j,n}\vect{s}_n| &\le \mu_\mathrm{max}s_\mathrm{max}, \label{eq:thm1_proof_9} \\ 
\mathrm{E}\left\{\mu_{j,n}^2\vect{s}_n^2\right\} &\le \frac{1}{N} \sum_{n=1}^{N}\mu_{\mathrm{max}}^2 \mathrm{E}\{\vect{s}_n^2\} \le \frac{\tau}{N} s_{\mathrm{max}}^2\mu_{\mathrm{max}}^2, \label{eq:thm1_proof_10}
\end{align}
Combining \eqref{eq:thm1_proof_9} and \eqref{eq:thm1_proof_10} with \eqref{eq:thm1_proof_8}, the following is obtained
\begin{equation}	\label{eq:thm1_proof_11}
\mathrm{Pr}\{\mathrm{error}\} \le 2N \; \mathrm{exp} \left(\frac{-N\rho^2}{2\tau^2\gamma^2+2N \gamma\rho/3}\right),  
\end{equation}
\noindent where we have defined $\gamma=\mu_\mathrm{max}s_\mathrm{max}$ for notational brevity. 

So far we have assumed that $|\langle\mat{A}_j,\vect{w}\rangle| \le \beta$, $\forall j$. Therefore, the probability of success is the joint probability of $\mathrm{Pr}\left\{|\langle\mat{A}_j,\vect{w}\rangle| \le \beta \right\}$ and the inverse of \eqref{eq:thm1_proof_11}. For the former, a lower bound was formulated in~\cite{ref30} as follows
\begin{equation}	 \label{eq:thm1_proof_12}
\mathrm{Pr}\left\{|\langle\mat{A}_j,\vect{w}\rangle| \le \beta \right\} \ge 1 - \underbrace{\sqrt{\frac{2}{\pi}}\frac{\sigma}{\beta}e^{-\beta^2/2\sigma^2}}_{P_3}.
\end{equation}
Since $|\langle\mat{A}_j,\vect{w}\rangle| \le \beta$ should hold $\forall j\in\{1,\dots, N\}$, we have
\begin{equation}	\label{eq:thm1_proof_13}
\underset{j=1,\dots,N}{\mathrm{Pr}}\left\{|\langle\mat{A}_j,\vect{w}\rangle| \le \beta \right\} \ge (1-P_3)^N \ge 1-NP_3.
\end{equation}
Inverting the probability event in \eqref{eq:thm1_proof_11} and multiplying by the lower bound in \eqref{eq:thm1_proof_13} yields \eqref{eq:ours}, which completes our proof. 
\end{proof}

\vspace{-5pt}
\section{Numerical Results} \label{sec:results}
\vspace{-2pt}
In this section we compare numerical results of Theorem \ref{thm:elad} (Ben-Haim et al.~\cite{ref30}), and Theorem \ref{thm:ours} (proposed herein) with the empirical results of OMP. Indeed we only consider probability of successful recovery of the support. An upper bound for the MSE of the oracle estimator has been previously established, see e.g. Theorem 5.1 in~\cite{ref1} or Lemma 4 in~\cite{ref30}. The oracle estimator knows the support of the signal, \emph{a priori}. 

\begin{figure*}[t] 
	\centering
	%\captionsetup[subfigure]{aboveskip=-1pt,belowskip=-10pt}
	
	\begin{minipage}[t]{0.329\textwidth}
	\centering
	\captionsetup[subfigure]{aboveskip=1pt,belowskip=2pt}
	\subcaptionbox{$M=1024$, $s_\mathrm{min}=0.5$, $s_\mathrm{max}=1$,\label{tau-pr-1024}}{\includegraphics[width=0.95\linewidth]{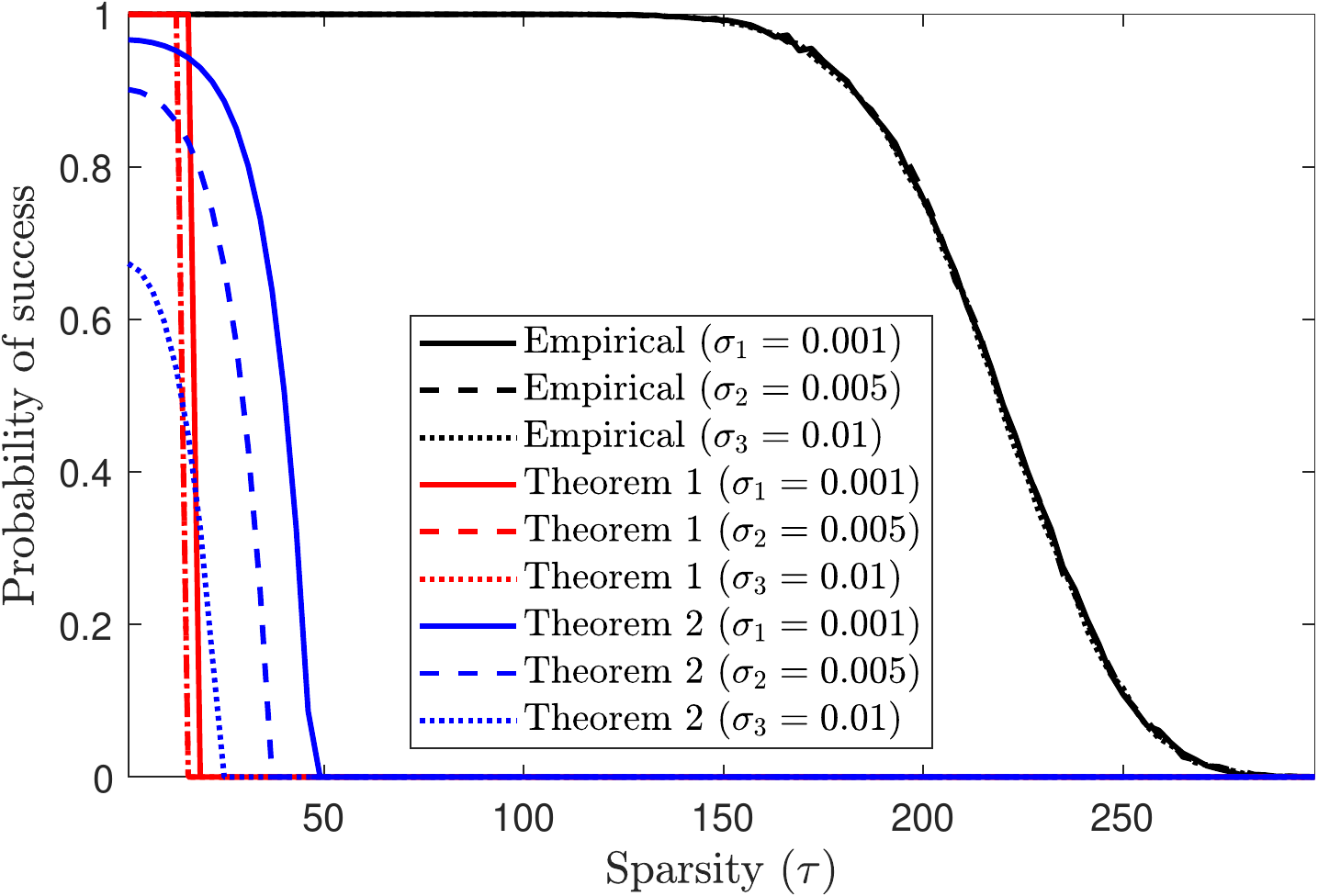}} 
	\subcaptionbox{$M=2048$, $s_\mathrm{min}=0.5$, $s_\mathrm{max}=1$,\label{tau-pr-2048}}{\includegraphics[width=0.95\linewidth]{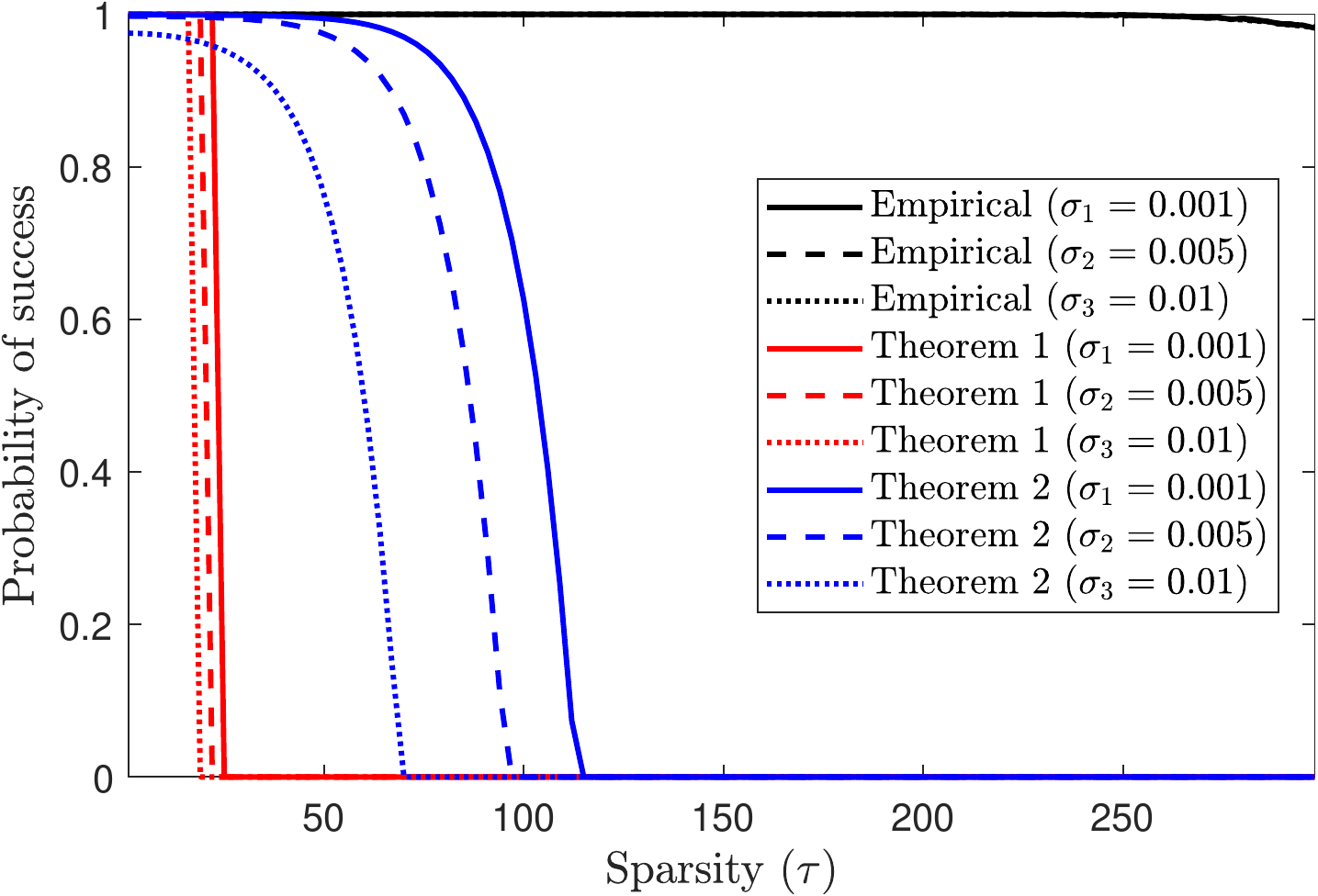}}
	\subcaptionbox{$M=4096$, $s_\mathrm{min}=0.5$, $s_\mathrm{max}=1$,\label{tau-pr-4096}}{\includegraphics[width=0.95\linewidth]{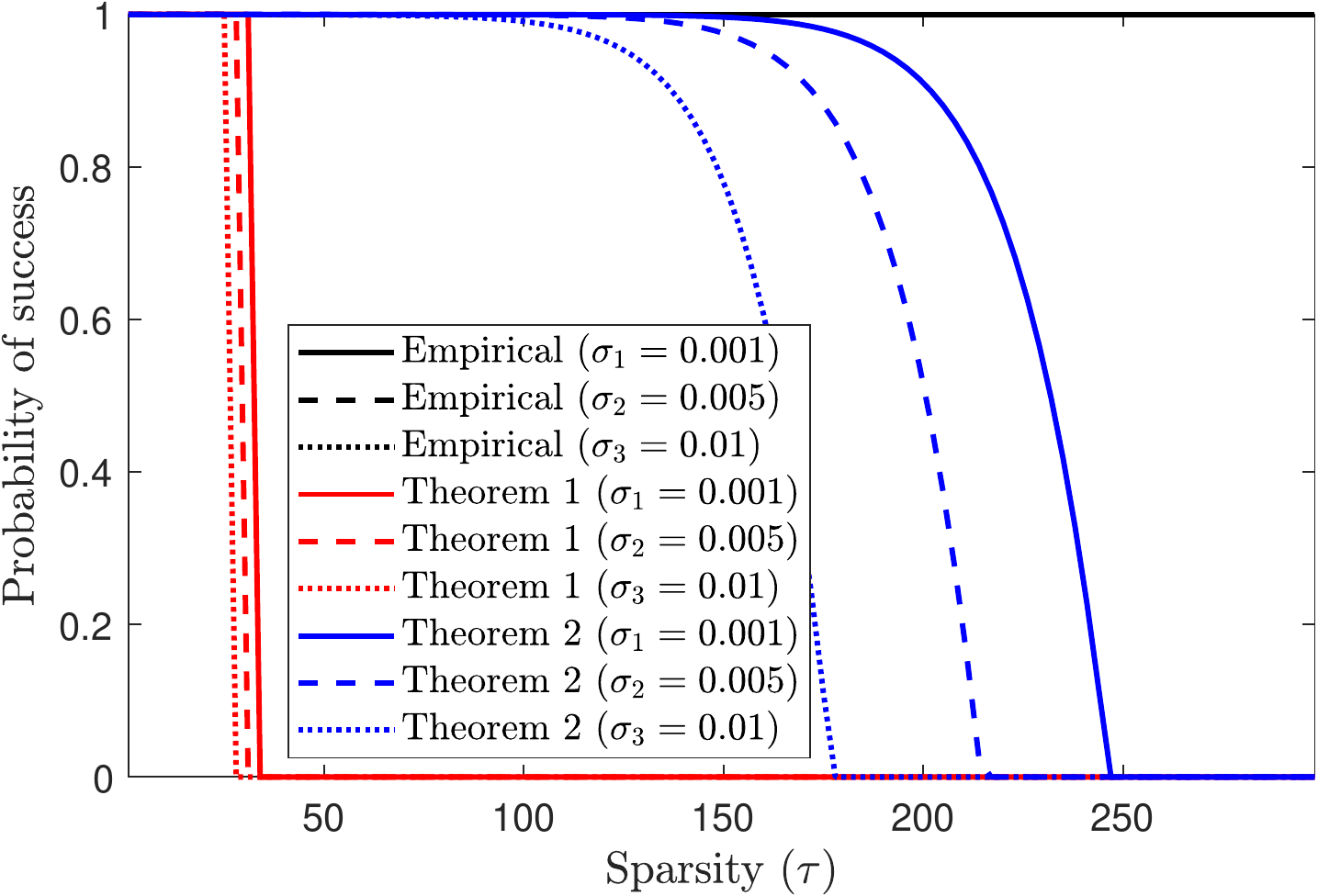}}
	\end{minipage}
	\begin{minipage}[t]{0.329\textwidth}
	\centering
	\captionsetup[subfigure]{aboveskip=1pt,belowskip=2pt}
	\subcaptionbox{$M=1024$, $s_\mathrm{max}=1$, $\sigma=0.01$\label{smin-pr-1024}}{\includegraphics[width=0.95\linewidth]{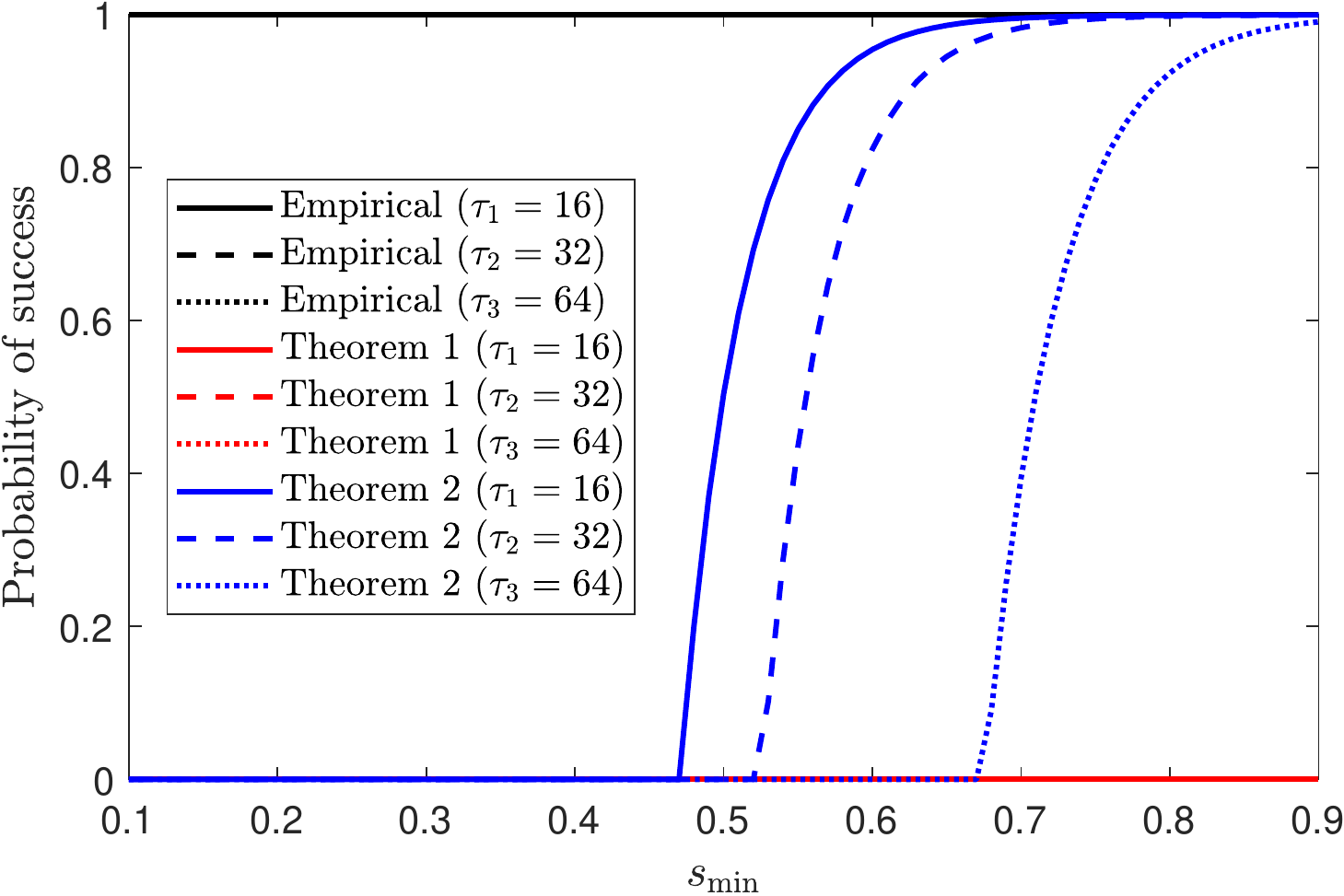}} 
	\subcaptionbox{$M=2048$, $s_\mathrm{max}=1$, $\sigma=0.01$\label{smin-pr-2048}}{\includegraphics[width=0.95\linewidth]{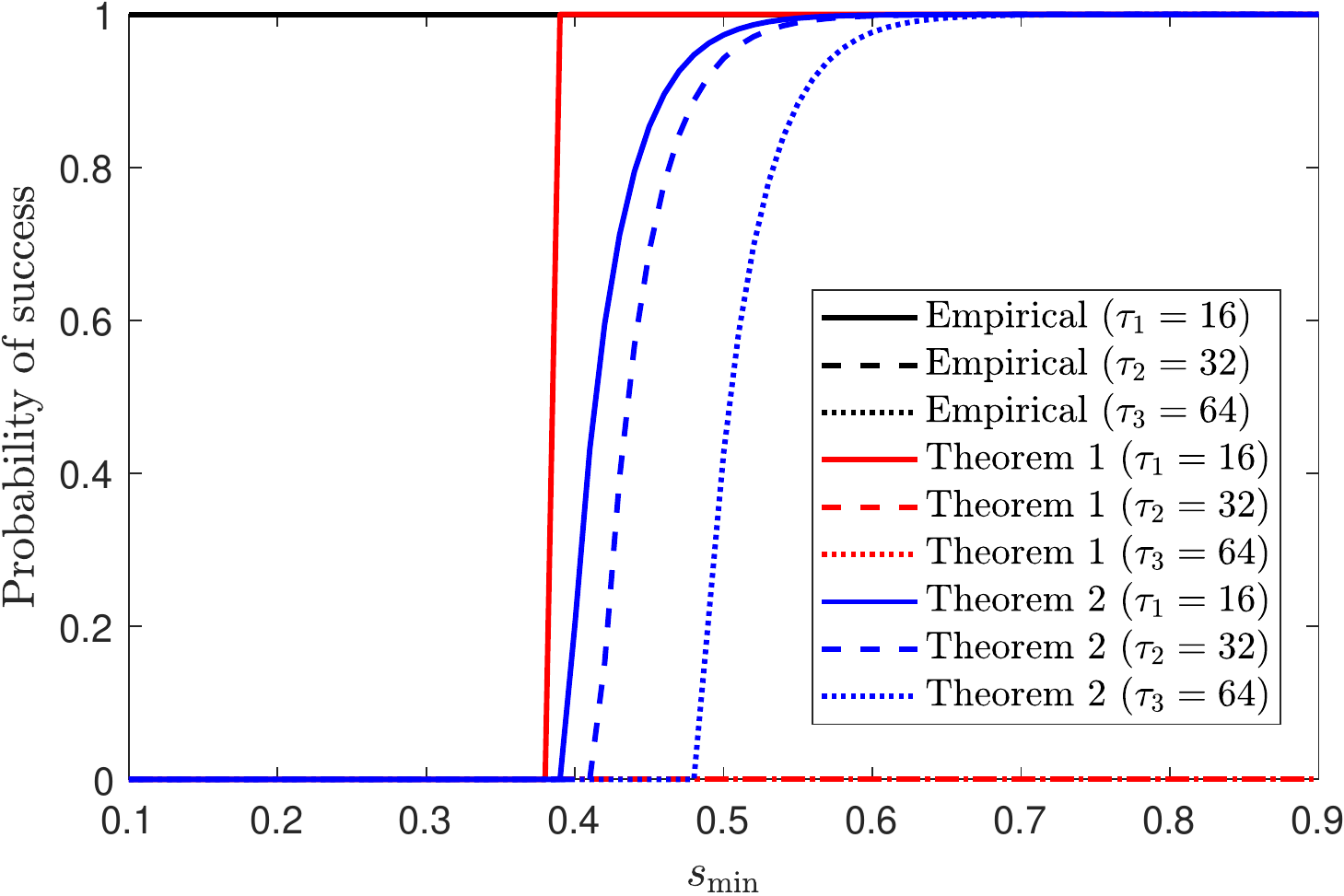}}
	\subcaptionbox{$M=4096$, $s_\mathrm{max}=1$, $\sigma=0.01$\label{smin-pr-4096}}{\includegraphics[width=0.95\linewidth]{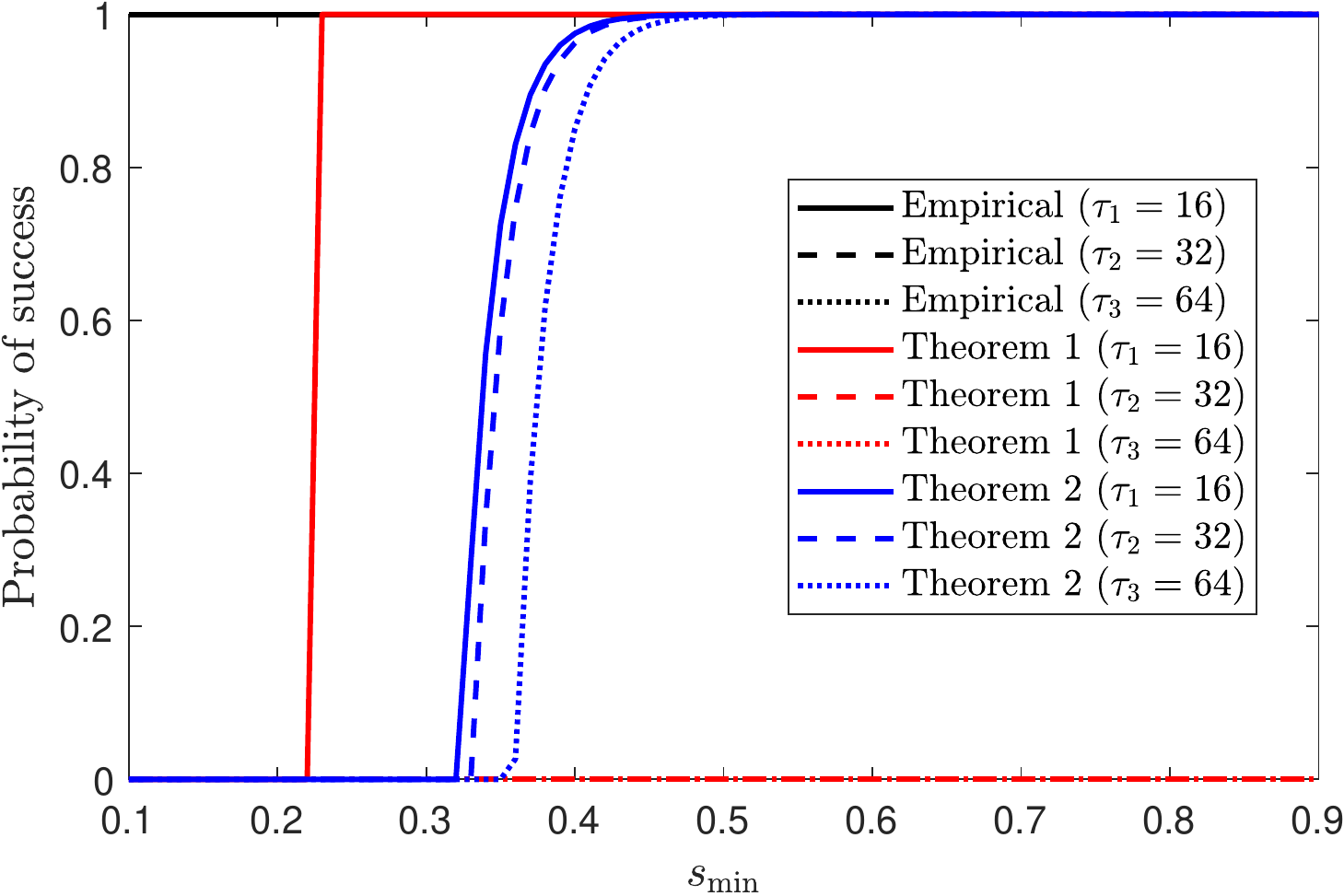}}
	\end{minipage}
	\begin{minipage}[t]{0.329\textwidth}
	\centering
	\captionsetup[subfigure]{aboveskip=1pt,belowskip=2pt}
	\subcaptionbox{$M=1024$, $s_\mathrm{min}=0.5$, $s_\mathrm{max}=1$,\label{sigma-pr-1024}}{\includegraphics[width=0.95\linewidth]{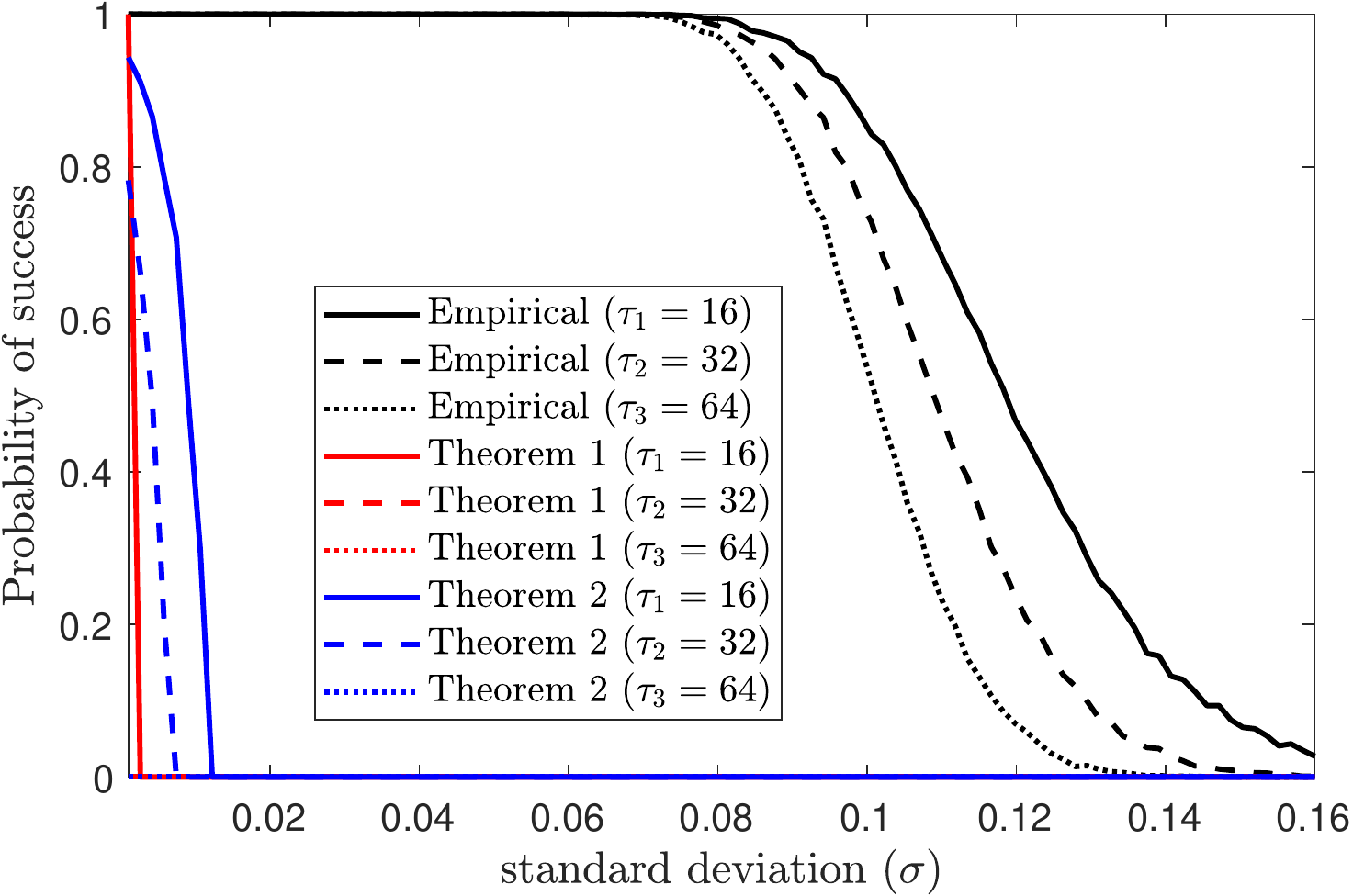}} 
	\subcaptionbox{$M=2048$, $s_\mathrm{min}=0.5$, $s_\mathrm{max}=1$,\label{sigma-pr-2048}}{\includegraphics[width=0.95\linewidth]{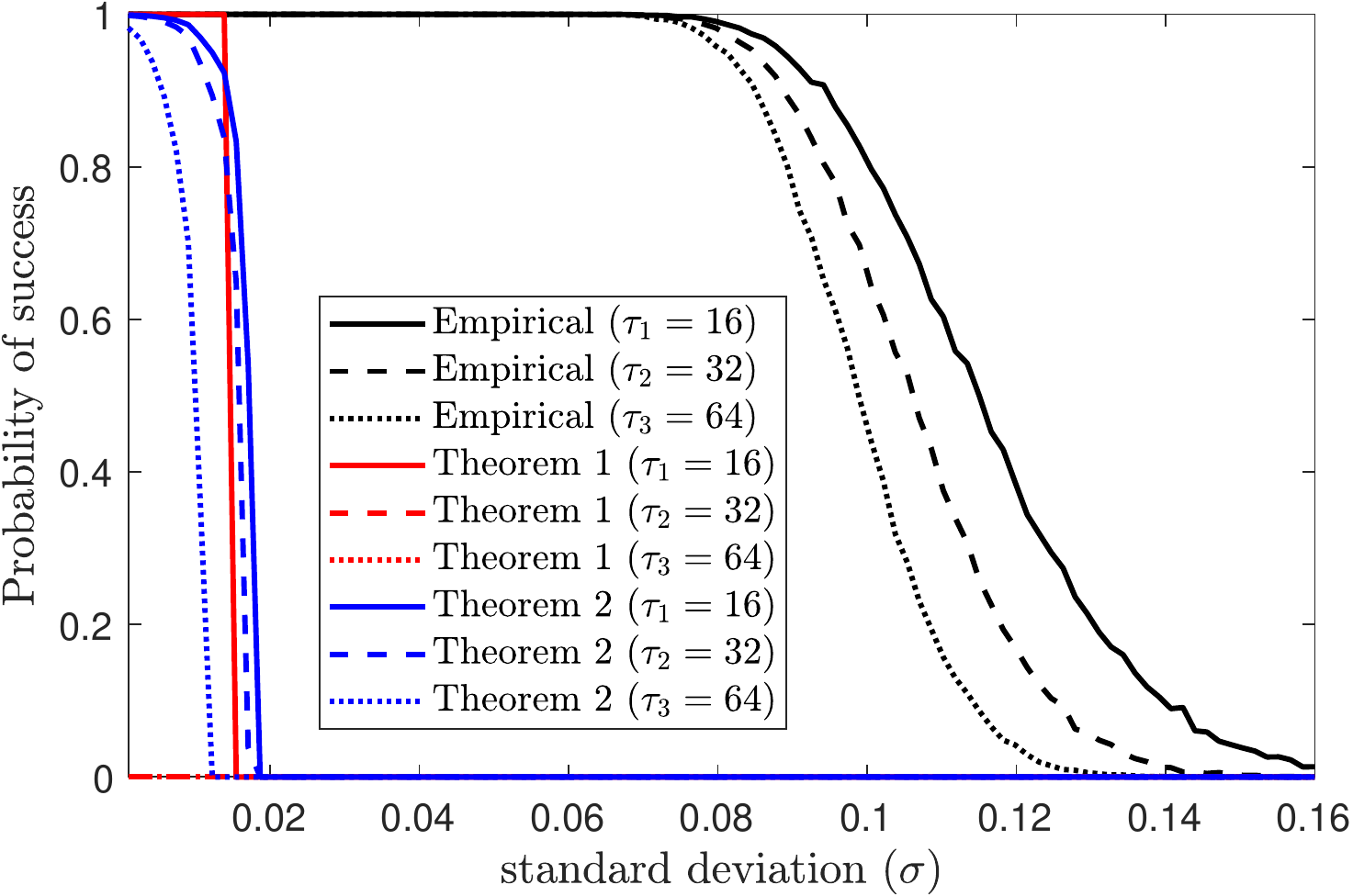}}
	\subcaptionbox{$M=4096$, $s_\mathrm{min}=0.5$, $s_\mathrm{max}=1$,\label{sigma-pr-4096}}{\includegraphics[width=0.95\linewidth]{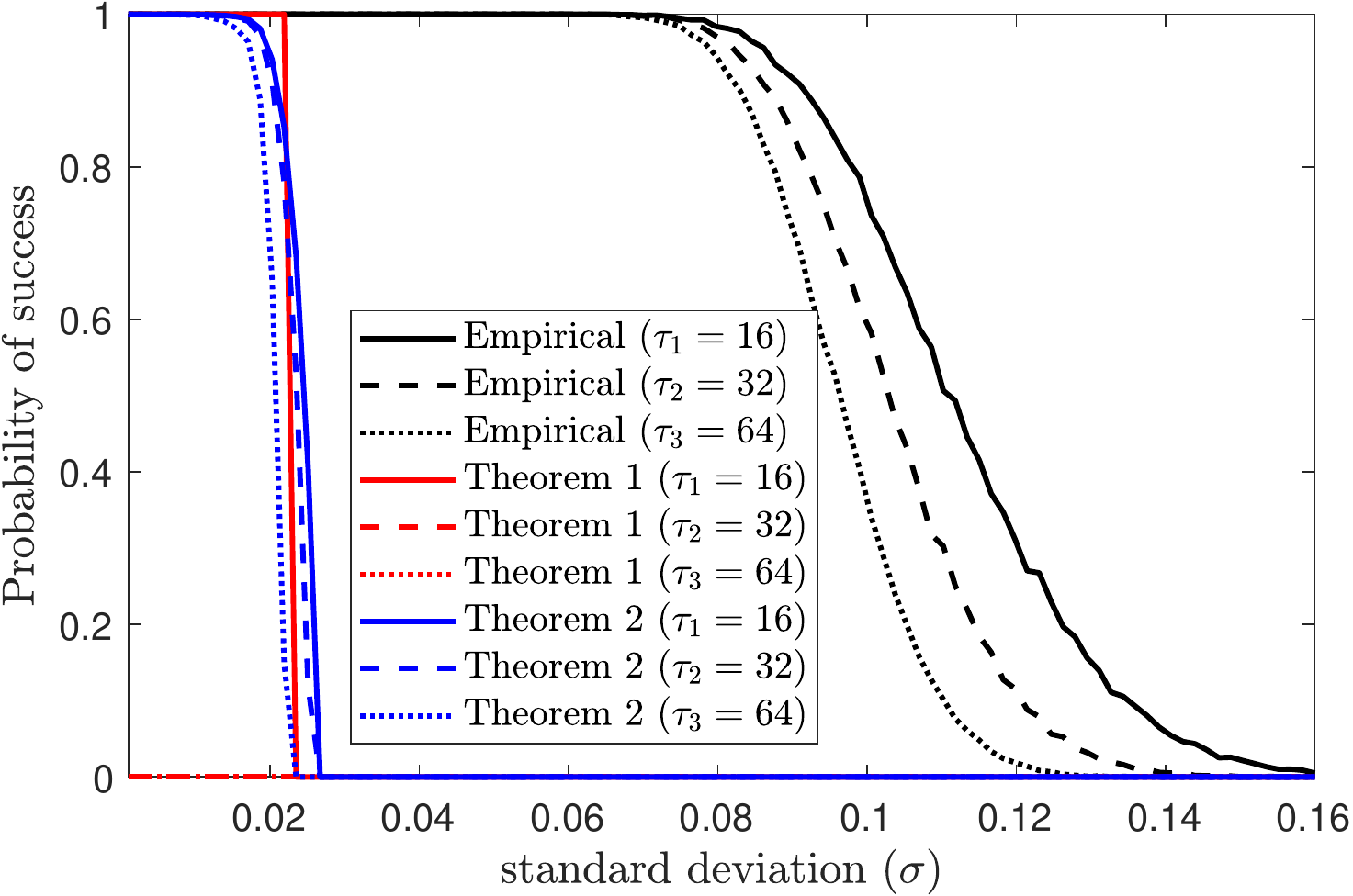}}
	\end{minipage}
	\caption{In each column of plots we demonstrate the effect of one parameter on the probability of successful support recovery while fixing the other parameters. Rows represent different values of $M$. The mutual coherence of the dictionary varies based on $M$. For $M=1024$, $M=2048$, and $M=4096$, we have $\mu_\mathrm{max}=0.0313$, $\mu_\mathrm{max}=0.0221$, and $\mu_\mathrm{max}=0.0156$, respectively.}
	\label{fig:pr-all}
\end{figure*}

All the empirical results are obtained by performing the OMP algorithm $5000$ times using a random sparse signal with additive white Gaussian noise in each trial. The probability of success is computed as the ratio of successful trials to the total number of trials; note that a trial is successful if $\Lambda=\hat{\Lambda}$, where $\hat{\Lambda}$ is the support of $\hat{\vect{s}}$ obtained from OMP by solving \eqref{eq:l0}. Moreover, the number of trials was empirically set such that the probability of success for the OMP algorithm was stable across different parameters. For comparison, we use the dictionary of \cite{ref30} defined as $\mat{A} = [\mat{I},\mat{H}]$, where $\mat{I}$ is an identity matrix and $\mat{H}$ is a Hadamard matrix, hence we have $N=2M$. 

The sparse signal in each trial, denoted $\vect{s}$ in \eqref{eq:problem:sp}, is constructed as follows: The support of the sparse signal, $\Lambda=\mathrm{supp}(\vect{s})$, is constructed by uniform random permutation of the set $\{1,\dots,N\}$ and taking the first $\tau$ indices. The nonzero elements located at $\Lambda$ are drawn randomly from a uniform distribution on the interval $[s_\mathrm{min},s_\mathrm{max}]$, multiplied randomly by $+1$ or $-1$. Once the sparse signal is constructed, the input of the OMP algorithm, $\vect{y}$, is obtained by evaluating \eqref{eq:problem:sp}. 

In order to facilitate the comparison of Theorems \ref{thm:elad} and \ref{thm:ours}, we need to fix the value of $\beta$. To do this, we empirically calculate $\beta$ as $\underset{\vect{w}}{\mathrm{max}} \; \underset{j}{\mathrm{max}}|\langle\mat{A}_j,\vect{w}\rangle|$, where the maximum over $\vect{w}$ is computed using $10^4$ vectors $\vect{w}\sim\mathcal{N}(0,\sigma^2\mat{I})$, as assumed by both theorems. Given $\beta$, we can calculate $\alpha$ for Theorem \ref{thm:elad} from the definition $\beta \triangleq \sigma \sqrt{2(1+\alpha)\log N}$. Indeed, a lower value of $\beta$ leads to better results for both theorems, see \eqref{eq:ours} and \eqref{eq:elad_cond}. As a result, here we consider the worst-case scenario. When \eqref{eq:elad_cond} is not satisfied for Theorem \ref{thm:elad}, we set the probability of success to zero. We use the same procedure for the condition of Theorem \ref{thm:ours}; i.e. the probability of success is set to zero when $s_\mathrm{min}/2<\beta$.%

Numerical results are summarized in Figure \ref{fig:pr-all}. We analyze the effect of sparsity on the probability of successful support recovery in plots \ref{tau-pr-1024}, \ref{tau-pr-2048}, and \ref{tau-pr-4096}. Three signal dimensionalities and three noise variances: $\sigma_1^2=10^{-6}$, $\sigma_2^2=2.5\times 10^{-5}$, and $\sigma_3^2=10^{-4}$, are considered. For all these cases we set $s_\mathrm{min}=0.5$ and $s_\mathrm{max}=1$. In Fig. \ref{tau-pr-1024} we see that Theorem \ref{thm:elad} achieves a higher probability for $\sigma_3$ and small values of $\tau$, while Theorem \ref{thm:ours} leads to more accurate results for larger values of $\tau$. Additionally, for $\sigma_1$ and $\sigma_2$, Theorem \ref{thm:ours} is much closer to empirical results. Most importantly, the shape of the probability curves for Theorem \ref{thm:ours} matches the empirical curves. In contrast, Theorem \ref{thm:elad} produces a step function due to the fact that condition \eqref{eq:elad_cond} is not satisfied for a large range of values for $\tau$, even though the success probability in \eqref{eq:elad_prob} is close to one for different values of $\sigma$. The condition of Theorem \ref{thm:ours} is satisfied across all the parameters for figures \ref{tau-pr-1024}-\ref{tau-pr-4096}. 

We discussed in section \ref{sec:analysis} that \eqref{eq:ours} is always smaller than \eqref{eq:elad_prob} due to the second term of \eqref{eq:ours}. We expect this term to become more accurate as the signal dimensionality grows since it is exponential in $N$; moreover, $\beta$ and $\mu_\mathrm{max}$ become smaller as $N$ grows. This is confirmed in figures \ref{tau-pr-2048} and \ref{tau-pr-4096}. As we increase $N$, the gap between theorems \ref{thm:elad} and \ref{thm:ours} increases, confirming that the second term of \eqref{eq:ours} is becoming more accurate compared to \eqref{eq:elad_cond}. The empirical probability is close to one for all the values of $\tau$ plotted in figures \ref{tau-pr-2048} and \ref{tau-pr-4096}. %The success probability for Theorem \ref{thm:ours} is close to one for a larger range of values for $\tau$ when compared to Theorem \ref{thm:elad}. 

The effect of $s_\mathrm{min}$ on the probability of success is demonstrated in figures \ref{smin-pr-1024}, \ref{smin-pr-2048}, and \ref{smin-pr-4096}. For each plot, we consider $\tau_1=16$, $\tau_2=32$, and $\tau_3=64$, while setting $\sigma^2=10^{-4}$. The empirical results show a probability of success close to one across the parameters considered. In Fig. \ref{smin-pr-1024} we see a significant difference between Theorems \ref{thm:elad} and \ref{thm:ours}. The condition of Theorem \ref{thm:elad} is not satisfied for any value of $s_\mathrm{min}$ and $\tau$. In contrast, Theorem \ref{thm:ours} shows high probabilities for all three values of $\tau$. The dynamic range (DR) of the signal can be defined as $s^2_\mathrm{max}/s^2_\mathrm{min}$. As we increase the signal dimensionality ($N$), Theorem \ref{thm:ours} reports larger probability for larger values of DR and all three values of $\tau$. On the other hand, the condition of Theorem \ref{thm:elad} fails for $\tau_2$ and $\tau_3$, even when we have $M=4096$. For $\tau_1$, Theorem \ref{thm:elad} can produce valid results for a slightly higher DR. 
%As it can be predicted from these graphs, Theorem \ref{thm:ours} can produce valid results for sparsity values well beyond $\tau=64$ given a sufficiently large signal. This does not hold for Theorem \ref{thm:elad}, where \eqref{eq:elad_cond} is only satisfied for small values of $\tau$, e.g. $\tau\le16$.

Lastly, in plots \ref{sigma-pr-1024}, \ref{sigma-pr-2048}, and \ref{sigma-pr-4096}, we analyze the effect of noise variance on the probability of success for $\tau_1=16$, $\tau_2=32$, and $\tau_3=64$. In Fig. \ref{sigma-pr-1024}, where $M=1024$, both theorems fail to produce valid results for $\tau_3=64$. However, Theorem \ref{thm:ours} reports acceptable results for $\tau_1$ and $\tau_2$, while the condition of Theorem \ref{thm:elad} is not satisfied. As the signal dimensionality grows, see Fig. \ref{sigma-pr-2048} and \ref{sigma-pr-4096}, Theorem \ref{thm:ours} becomes more tolerant of higher noise variances. The results for Theorem \ref{thm:elad} also improves with increasing signal dimensionality, however only for $\tau_1$. This shows the robustness of Theorem \ref{thm:ours} to larger values of sparsity. %\textcolor{red}{Additionally, the condition of Theorem \ref{thm:ours} is only violated for large noise values, e.g. when $\sigma\gtrsim 0.045$ in Fig. \ref{sigma-pr-2048}. Acc However, this happens when the probability of success is already zero, e.g. for $\sigma\gtrsim 0.02$, see Fig. \ref{sigma-pr-2048}. Hence, even in this case the performance of \eqref{eq:ours} is not affected by the condition.}
	
%Hence the condition $s_\mathrm{min}/2\ge\beta$ is far more relaxed than \eqref{eq:elad_cond}.}  

%Moreover, we see in Fig. \ref{sigma-pr-4096} that the probability curves for $\tau_1$, $\tau_2$, and $\tau_3$ become closer to each other compared to Fig. \ref{sigma-pr-2048}. The same pattern can be seen in empirical curves, while Theorem \ref{thm:elad} does not show such property because \eqref{eq:elad_cond} is only satisfied for the smallest sparsity value.

%According to the discussion in section \ref{sec:analysis}, one expects more accurate results for larger signal sizes. In Fig. \ref{fig:pr_tau_1024} we repeat the experiment in Fig. \ref{fig:pr_tau_512} but with $N=2048$ and $M=1024$. The dictionary is constructed in the same manner and $\mu_\mathrm{max}=0.0313$. In this scenario we see that our results improve significantly. The curve attains the shape of the empirical probability curve and the results for low values of $\tau$ are now very close to the empirical results. Note that for the case of $\sigma=0.0025$, the condition of Theorem \ref{thm:elad} is not satisfied across all values of $\tau$. 

%Finally, Fig. \ref{fig:pr_s_min} presents the effect of signal dynamic range on the probability of support recovery. For this test we set $N=1024$, $M=512$, $\tau=5$, $\sigma^2=0.0025$, and $s_\mathrm{max}=1$, while varying $s_\mathrm{min}\in[0.01,1]$. Here we also see that Theorem \ref{thm:ours} achieves results that match empirical results more closely compared to what is obtained using Theorem \ref{thm:elad}.

\section{Conclusions}  \label{sec:conclusion}
We presented a new bound for the probability of correctly identifying the support of a noisy sparse signal using the OMP algorithm. Compared to the analysis of Ben-Haim et al.~\cite{ref30}, our analysis replaces a sharp condition with a probabilistic bound. Comparisons to empirical results obtained by OMP show a much improved correlation than previous work.

%\vspace{-7pt}
\appendix
\begin{proof}[proof of Lemma \ref{lem:1}]
	Expanding $\Gamma_j$, we can show that 
	\begin{align}
	\Gamma_j &= \left|\sum_{m=1}^{M}\mat{A}_{m,j}\left(\sum_{n=1}^{N}\mat{A}_{m,n}\vect{s}_n+\vect{w}_m\right)\right|  \\
	&=\left|\sum_{n=1}^{N}\left\{\sum_{m=1}^{M}\mat{A}_{m,j}\mat{A}_{m,n}\vect{s}_n + \frac{1}{N}\sum_{m=1}^{M}\mat{A}_{m,j}\vect{w}_m\right\}\right|.  \\
	&=\left|\sum_{n=1}^{N}\left\{\mu_{j,n}\vect{s}_n+\frac{1}{N}\langle\mat{A}_j,\vect{w}\rangle\right\}\right|.
	\label{eq:lemma1_proof_1}
	\end{align}
	We are interested in tail bounds for sum of random variables $\mu_{j,n}\vect{s}_n+ N^{-1} \langle\mat{A}_j,\vect{w}\rangle$, for $n=1,\dots,N$. Let us define $\vect{x}_n = \mu_{j,n}\vect{s}_n$. Using the assumption $|\langle\mat{A}_j,\vect{w}\rangle|\le\beta$ we have
	\begin{align}	
	\mathrm{Pr} \left\{ \Gamma_j \ge \xi \right\} &\le \mathrm{Pr} \left\{\left|\sum_{n=1}^{N}\vect{x}_n\right| + \left|\frac{1}{N}\sum_{n=1}^{N} \langle\mat{A}_j,\vect{w}\rangle\right| \ge \xi  \right\} \nonumber \\
	&\le \mathrm{Pr} \left\{ \left|\sum_{n=1}^{N}\vect{x}_n\right| \ge \xi - \beta \right\}. \label{eq:lemma1_proof_4} 
	\end{align}
	Since $\{\vect{s}_n\}_{n=1}^N$,  and hence $\{\vect{x}_n\}_{n=1}^N$, are centered independent real random variables, according to Bernstein's inequality \cite{bernstein}, if  $\mathrm{E}\left\{\vect{x}_n^2\right\} \le \nu$, and $\mathrm{Pr}\{|\vect{x}_n| < c\}=1$, then for a positive constant $\delta$ we have
	\begin{align}
	\mathrm{Pr}\left\{\left|\sum\limits_{n=1}^N\vect{x}_n\right| \ge \delta \right\} &\le 2\;\mathrm{exp}\left(\frac{-\delta^2}{2\left(\sum\limits_{n=1}^{N}\mathrm{E}\left\{\vect{x}_n^2\right\}+c\delta/3\right)}\right) \nonumber \\
	&\le 2\;\mathrm{exp}\left(\frac{-\delta^2}{2(N\nu+c\delta/3)}\right). \label{eq:lemma1_proof_5}
	\end{align}
	\noindent Setting $\delta=\xi-\beta$ in \eqref{eq:lemma1_proof_5} completes the proof.
\end{proof}

\vfill
%\clearpage
\newpage

% Can use something like this to put references on a page
% by themselves when using endfloat and the captionsoff option.
%\ifCLASSOPTIONcaptionsoff
%  \newpage
%\fi

% trigger a \newpage just before the given reference
% number - used to balance the columns on the last page
% adjust value as needed - may need to be readjusted if
% the document is modified later
%\IEEEtriggeratref{8}
% The "triggered" command can be changed if desired:
%\IEEEtriggercmd{\enlargethispage{-5in}}

% references section

% can use a bibliography generated by BibTeX as a .bbl file
% BibTeX documentation can be easily obtained at:
% http://www.ctan.org/tex-archive/biblio/bibtex/contrib/doc/
% The IEEEtran BibTeX style support page is at:
% http://www.michaelshell.org/tex/ieeetran/bibtex/
%\bibliographystyle{IEEEtran}
% argument is your BibTeX string definitions and bibliography database(s)
%\bibliography{IEEEabrv,../bib/paper}
%
% <OR> manually copy in the resultant .bbl file
% set second argument of \begin to the number of references
% (used to reserve space for the reference number labels box)

\bibliographystyle{IEEEtran}
\bibliography{main}

\end{document}